\newcommand{\evenhead}{Author \ name}
\newcommand{\oddhead}{Article \ name}
\newcommand{\theArticleName}{Article name}
\newcommand{\FirstPageHeading}[1]{\thispagestyle{empty}%
\noindent\raisebox{0pt}[0pt][0pt]{\makebox[\textwidth]{\protect\footnotesize \sf }}\par}
\newcommand{\ArticleName}[1]{\renewcommand{\theArticleName}{#1}\vspace{-2mm}
\par\noindent {\LARGE\bf  #1\par}}
\newcommand{\Author}[1]{\vspace{5mm}\par\noindent {\it #1} \par\vspace{2mm}\par}
\newcommand{\Address}[1]{\vspace{2mm}\par\noindent {\it #1} \par}
\newcommand{\Email}[1]{\ifthenelse{\equal{#1}{}}{}{\par\noindent {\rm E-mail: }{\it  #1} \par}}
\newcommand{\URLaddress}[1]{\ifthenelse{\equal{#1}{}}{}{\par\noindent {\rm URL: }{\tt  #1} \par}}
\newcommand{\EmailD}[1]{\ifthenelse{\equal{#1}{}}{}{\par\noindent {$\phantom{\dag}$~\rm E-mail: }{\it  #1} \par}}
\newcommand{\URLaddressD}[1]{\ifthenelse{\equal{#1}{}}{}{\par\noindent {$\phantom{\dag}$~\rm URL: }{\tt  #1} \par}}
\newcommand{\Keywords}[1]{\vspace{3mm}\par\noindent\hspace*{10mm}
\parbox{130mm}{\small {\it Key words:} \rm #1}\par}
\newcommand{\Classification}[1]{\vspace{3mm}\par\noindent\hspace*{10mm}
\parbox{130mm}{\small {\it 2000 Mathematics Subject Classification:} \rm #1}\vspace{3mm}\par}
\newcommand{\ShortArticleName}[1]{\renewcommand{\oddhead}{#1}}
\newcommand{\AuthorNameForHeading}[1]{\renewcommand{\evenhead}{#1}}
\long\def\@makecaption#1#2{
  \sbox\@tempboxa{\small \textbf{#1.}\ \ #2}%
  \ifdim \wd\@tempboxa >\hsize
    {\small \textbf{#1.}\ \ #2}\par \else
    \global \@minipagefalse
    \hb@xt@\hsize{\hfil\box\@tempboxa\hfil}%
  \fi \vskip\belowcaptionskip}
\def\numberwithin#1#2{\@ifundefined{c@#1}{\@nocounterr{#1}}{%
  \@ifundefined{c@#2}{\@nocnterr{#2}}{%
  \@addtoreset{#1}{#2}%
  \toks@\@xp\@xp\@xp{\csname the#1\endcsname}%
  \@xp\xdef\csname the#1\endcsname
    {\@xp\@nx\csname the#2\endcsname.\the\toks@}}}}
\newtheorem{lemma}{Lemma}
\newtheorem{proposition}{Proposition}
\theoremstyle{definition}
\def \E^#1{{ \buildrel #1 \over\vee}}
\begin{document}
\allowdisplaybreaks

\FirstPageHeading{Tatiana~V.~Ryabukha}
\ShortArticleName{On the existence of functionals for mean values of
observables}

\ArticleName{On the existence of functionals for the mean values \\of
observables
}

\Author{Tatiana~V.~Ryabukha}
\AuthorNameForHeading{T.~V.~Ryabukha}

\Address{Institute of Mathematics of NAS of Ukraine, Kyiv, Ukraine}


{\vspace{6mm}\par\noindent\hspace*{8mm}
\parbox{140mm}{\small { $\quad$
The aim of this work is to study the existence of mean values of observables
for infinite-particle systems.
Using solutions of the initial value problems to the BBGKY hierarchy and to its dual, we prove the local, in time, existence of the  mean value functionals in the cases where either observables or states vary in time.
We also discuss problems on the existence of such functionals for several different classes of observables and for an arbitrary time interval.

\Keywords{infinite-particle systems; 
BBGKY hierarchy;
dual BBGKY hierarchy;
cumulant (semi-invariant); mean values of observables.}
\Classification{35Q40; 47d06.}
}}}

\makeatletter
\renewcommand{\@evenhead}{
\hspace*{-3pt}\raisebox{-15pt}[\headheight][0pt]{\vbox{\hbox to \textwidth {\thepage \hfil \evenhead}\vskip4pt \hrule}}}
\renewcommand{\@oddhead}{
\hspace*{-3pt}\raisebox{-15pt}[\headheight][0pt]{\vbox{\hbox to \textwidth {\oddhead \hfil \thepage}\vskip4pt\hrule}}}
\renewcommand{\@evenfoot}{}
\renewcommand{\@oddfoot}{}
\makeatother

\newpage
\renewcommand{\contentsname}{Content}
\protect \tableofcontents
\thispagestyle{empty}

\newpage
\section{Introduction}
\setcounter{equation}{0}
\underline{}
A particular progress in the study of dynamics of infinitely many particles have been achieved during the last decade \cite{CGP97,CIP94,Pe08,Spohn04}. As is well known the evolution of states of such systems is completely determined by an initial value problem to the BBGKY hierarchy \cite{CGP97}. A solution of
this initial value problem represented in the form of a series expansion as a result of integration of perturbation series with respect to the time variables was first constructed in \cite{Pe79} for a one-dimensional system of particles interacting via short range potential with hard-core for initial data close to equilibrium states. The divergence of integrals with respect to the configuration variables in every term of the expansion is a typical problem which complicates the construction of the solution for infinite-particle systems. The method of an interaction region \cite{PG83,Pe79,CGP97} provides one of the approaches to eliminate this obstacle. A further result on the infinite-particle dynamics for three-dimensional hard sphere systems \cite{CIP94,GP85,Lan75,PG90,Spohn04} is based on the construction of a solution of the BBGKY hierarchy in the form of perturbation series.

Recently a solution of the initial value problem to the BBGKY hierarchy  \cite{GR02umj,GRS04physA} has been constructed in the form of expansion over particle clusters, the evolution of which is governed by the corresponding order cumulant of the evolution operators of finitely many particles. The above-mentioned representations for a solution of the BBGKY hierarchy are particular cases of those from \cite{GR02umj,GRS04physA}. The method of an interaction region  for this representations is developed in  \cite{R06SIGMA}.

In this paper  we develop a similar approach to the regularization of a solution of the dual BBGKY hierarchy   \cite{GR02umj,GR03} which describes the evolution of marginal observables of many-particle systems \cite{BG01,GR03}. We prove the existence of the mean values for infinite-particle  systems in both cases where either the evolutions of observables or the evolution of states is considered.

We first present some preliminary facts about description of infinite-particle systems. We consider a one-dimensional system of identical particles (with unit mass)
interacting via a short range pair potential
$\Phi$
(with  hard core) that possesses the following properties
\begin{equation}
\label{PhiSpher}
    \begin{array}{ll}
            a)&    \Phi\in C^2([\sigma,R]), \quad 0<\sigma<R<\infty,\\
            b)&    \Phi(|q|)=\left\{\begin{array}{ll}
                                    +\infty, \quad &|q|\in[0,\sigma),\\
                                    0,             &|q|\in(R,\infty),
                                    \end{array}
                             \right.\\
            c)&    \Phi'(\sigma+0)=0,
    \end{array}
\end{equation}
\par\noindent
where
$R$
is the radius  of forces acting between particles with the length
$\sigma>0$.

Every $i$th particle is characterized by a coordinate
$q_i$
of the center of hard-core and momentum
$p_i$.
Let us denote
$x_{i}\equiv(q_{i},p_{i})\in\mathbb{R}\times{\mathbb R},\; i\geq1.$
For
$n$-particle system the inequalities
\begin{equation*}
    |q_i-q_j|\geq\sigma,
    \quad
    i\neq j\in\{1,\ldots,n\},
    \quad
    n\geq2,
\end{equation*}
hold, i.~e.  particles can occupy only admissible configurations. The set
$W_n
\equiv
\big\{
    (q_1,\ldots,q_n)\in\mathbb{R}^n
    \mathrel|
    \text{for at least one pair }
    (i,j),\, i\neq j\in\{1,\ldots,n\},\,
    \text{such that }
    |q_i-q_j|<\sigma,
\big\}$,
$n\geq2,$
is a region of forbid\-den configurations.

At the initial instant
$t=0$
observables are described by the sequences of marginal ($s$-particle) functions
$G(0)=(G_0,G_{1}(0,x_1),\ldots,G_{s}(0,x_1,\ldots,x_s),\ldots)$
and states by the sequences of marginal ($s$-particle) distribution functions
$F(0)=\big(F_0,F_{1}(0,x_{1}),\ldots,F_{s}(0,x_{1},\ldots,x_{s}),\ldots\big)$.
A mean value (a mathematical expectation) of observable
$G(0)$
for a system in the state
$F(0)$
is determined by the following functional
\cite{CGP97,BG01}:
\begin{equation}
    \label{<G>}
    \langle G \rangle (0)
   =\langle G(0)|F(0) \rangle
   =\sum\limits_{s=0}^{\infty}
        \frac{1}{s!}
        \int\limits_{(\mathbb{R}^s\setminus W_s)\times\mathbb{R}^s}
            dx_1\ldots dx_s\,
            G_s(0,x_1,\ldots,x_s)\,
            F_s(0,x_1,\ldots,x_s),
\end{equation}
where
$F_0=1,$ $G_0=0.$

To describe infinite particle systems we introduce the space
$C_\gamma$
of sequences
$g=\big(g_0,g_1(x_1),\ldots\linebreak\ldots,g_n(x_1,\ldots,x_n),\ldots\big)$
of bounded (continuous) functions
$g_n (x_1,\ldots, x_n)$, $n\geq0$
($g_0$ is a number), given on the phase space
$(\mathbb{R}^n\setminus W_n)\times\mathbb{R}^n$,
symmetric with respect to arbitrary permutations of the arguments
$x_i,\;i=1,\ldots,n,$
and equal zero in the region of forbid\-den configurations
$W_n$,
with the norm \cite{BG01}
\begin{equation*}
\label{Cgamma}
    \| g\|_C{_\gamma}
   =\sup_{n\geq0}\,
    \frac{\gamma^n}{n!}\,
    \sup_{x_1,\ldots, x_n}
    \big|
        g_n(x_1,\ldots, x_n)
    \big|,
\end{equation*}
\par\noindent
where
$0<\gamma <1$
is a constant.
We denote  by
$C_{\gamma,0}\subset C_\gamma$
the subspace of finite sequences of continuously differentiable
functions with compact supports on the configuration space.
The sequence of functions
$g\in C_{\gamma,0}$
is treated as a quasiobservable (analog an local observable
\cite{CGP97,Bo48}).

States of infinite particle systems are described by sequences from the space
$L_{\xi,\beta}^\infty$
of sequences
$f=\linebreak=\big(f_0,f_1(x_1),\ldots,f_n({x}_{1},\dots,{x}_{n}),\ldots\big)$
of functions
$f_n({x}_{1},\dots,{x}_{n}),$ $n\geq0$ $(f_0=1)$,
given on the phase space
$(\mathbb{R}^n\setminus W_n)\times\mathbb{R}^n$,
symmetric with respect to arbitrary permutations of the arguments
$x_i,\,i=1,\ldots,n$,
and equal zero on the forbidden configurations
$W_n$,
with the norm \cite{CGP97,Pe79}
\begin{equation*}
\label{Lxi,beta}
    ||f||_{L_{\xi,\beta}^\infty}
   =\sup_{n\geq0}\xi^{-n}\,
    \sup_{{x}_{1},\dots,{x}_{n}}
    \big|
        f_n({x}_{1},\dots,{x}_{n})
    \big|\,
    \exp\bigg\{\beta\sum\limits_{i=1}^n\frac{p_i^2}{2}\bigg\},
\end{equation*}
where
$\xi,\beta>0$
are constants.

We make note if
$F(0)\in L_{\xi,\beta}^\infty$
and
$G(0)\in C_{\gamma,0}$,
the following estimate
    \begin{equation*}
    \label{maj}
         \big|\langle G(0)|F(0) \rangle\big|
    \leq ||G(0)||_{C_\gamma}\,
         ||F(0)||_{L_{\xi,\beta}^\infty}\,
         \sum_{s=0}^\infty
         \bigg(\frac{C\,\xi}{\gamma}\sqrt{\frac{2\pi}{\beta}}\bigg)^s
    \end{equation*}
holds, therefore functional (\ref{<G>}) is well-defined under the condition
\begin{equation*}
\label{xi0<}
    \xi<\frac{\gamma}{C}\sqrt{\frac\beta{2\pi}},
\end{equation*}
where
$C=\max_{i=1,\ldots,s}|l_{i}(0)|$ 
and
$|l_{i}(0)|$
is a length of the interval
$ l_i(0)$
such that
$\Omega_s(0)=\linebreak={l_1(0)\times\ldots\times l_s(0)}$
\label{A}
is a support of function
$G_{s}(0)$
in the configuration space.

Let us note that for observable of the additive type $G^{(1)}(0)=\big(0,a_{1}(0,x_1),0,\ldots,0,\ldots\big)$
functional (\ref{<G>}) has the form
\begin{equation}
\label{<G1>}
    \big\langle
            G^{(1)}
    \big\rangle(0)
   =\big\langle
            G^{(1)}(0)|F(0)
    \big\rangle
   =\int\limits_{\mathbb{R}\times\mathbb{R}}
        dx_1\,
        a_1(0,x_1)\,
        F_1(0,x_1).
\end{equation}
If
$F(0)\in L_{\xi,\beta}^\infty$
and
$G(0)\in C_{\gamma,0}$,
then functional \eqref{<G1>} is well-defined for an arbitrary value of the parameter
$\xi>0$
since
\begin{equation*}
     \big|
        \langle
            G^{(1)}(0)|F(0)
        \rangle
     \big|
\leq ||G^{(1)}(0)||_{C_\gamma}\,
     ||F(0)||_{L_{\xi,\beta}^\infty}\,
     \frac{C\xi}{\gamma}\sqrt{\frac{2\pi}{\beta}}
    <\infty.
\end{equation*}

At an arbitrary instant of time
$t\in\mathbb{R}^1$
the mean value of observable is determined by the following functional
\cite{CGP97}:
\begin{equation}
    \label{<Gt>b}
    \langle G \rangle (t)
   =\langle G(t)|F(0) \rangle
   =\sum\limits_{s=0}^{\infty}
        \frac{1}{s!}
        \int\limits_{(\mathbb{R}^s\setminus W_s)\times\mathbb{R}^s}
            dx_1\ldots dx_s\,
            G_s(t,x_1,\ldots,x_s)\,
            F_s(0,x_1,\ldots,x_s),
\end{equation}
where
$G(t)=\big(0,G_1 (t, x_1),\ldots,G_s (t,x_1,\ldots, x_s),\ldots\big)$
is a solution of the initial value problem to the dual BBGKY hierarchy
\cite{BG01,GR03} with the initial data
$G(0)$,
or by the other functional:
\begin{equation}
    \label{<Gt>a}
    \langle G \rangle (t)
   =\langle G(0)|F(t) \rangle
   =\sum\limits_{s=0}^{\infty}
        \frac{1}{s!}
        \int\limits_{(\mathbb{R}^s\setminus W_s)\times\mathbb{R}^s}
            dx_1\ldots dx_s\,
            G_s(0,x_1,\ldots,x_s)\,
            F_s(t,x_1,\ldots,x_s),
\end{equation}
where
$F(t)= \big( 1, F_1 (t, x_1),\ldots,F_s (t,x_1,\ldots, x_s),\ldots \big)$
is a solution of the initial value problem to the BBGKY hierarchy
\cite{Bo48,CGP97,Pe08} with the initial data
$F(0)$.

Let us outline the structure of the paper.
In the second section we apply the procedure of the regularization  suggested in \cite{R06SIGMA} to a solution of the dual BBGKY hierarchy.
Then, in Sections~3~and~4, we prove the existence of  functionals (\ref{<Gt>b}) and (\ref{<Gt>a}) using results of \cite{R06SIGMA}. In the final section we give some concluding remarks.

\medskip
\section{The regularized solution of the dual BBGKY hierarchy}
\setcounter{equation}{0}
Before proving the existence of functional (\ref{<Gt>b}) we cite preliminary facts about a solution of the dual BBGKY hierarchy and construct a regularized representation for the solution which in the next section allows us to compensate the divergent terms appearing in functional (\ref{<Gt>b}) for infinite-particle systems.

Let
$(x_1,\ldots,x_s)=Y,$
$(x_{j_1},\ldots,x_{j_{s-n}})=Y\setminus X$,
$\{j_1,\ldots,j_{s-n}\}\subseteqq\{1,\ldots,s\}$,
e.~i.
$X=(x_1,\overset{\E^{j_1}}{},\ldots,\overset{\E^{j_{s-n}}}{},x_s)$,
where
$(x_1,\overset{\E^{j_k}}{\ldots},x_s)
 \equiv
 (x_1,\ldots,x_{j_{k-1}},x_{j_{k+1}},\ldots,x_s)$.
We denote by
$|X|$
a number of elements of the set
$X$, $0\leq |X|=n\leq s$.

We introduce the evolution operator
$S_{s}(t,Y),$
$s\geq1$,
defined on the space
$C_\gamma$
of sequences of continuous functions
by the following formula \cite{CGP97}
\begin{equation}
\label{Sspher}
    S_{|Y|}(t,Y)\,
    f_{|Y|}(Y)
   =\begin{cases}
         f_{|Y|}\big(\mathbf{X}_{1}(t,Y),\ldots,\mathbf{X}_{|Y|}(t,Y)\big),
         &Y\in
         \big(
              (\mathbb{R}^{|Y|}\setminus W_{|Y|})
              \times
              \mathbb{R}^{|Y|}
         \big)
         \setminus
         {\mathcal{M}}_{|Y|}^0,\\
         0,
         &Y\in\big(W_{|Y|}\times\mathbb{R}^{|Y|}\big)
         \cup{\mathcal{M}}_{|Y|}^0,
    \end{cases}
\end{equation}
\medskip
where
$\mathbf{X}_{j}(t,Y),$
$j=1,\ldots,|Y|,$
is the phase trajectory \cite{CGP97} of a system of
$|Y|=s$
particles with the initial data
$\mathbf{X}_{j}(0,Y)=x_j$
and
$S_{|Y|}(0)=I$
is a unit operator. We note that the phase trajectory of a system  \cite{CGP97} with the interaction potential (\ref{PhiSpher}) is defined not for all initial data
$(x_1,\dots,x_n)\in(\mathbb{R}^n\setminus W_n)\times\mathbb{R}^n$,
it is defined almost everywhere on the phase space
$(\mathbb{R}^n\setminus W_n)\times\mathbb{R}^n$,
namely, exterior to a certain set
${\mathcal{M}}_{n}^0$
of the Lebesgue measure zero (the set
${\mathcal{M}}_{n}^0$
contains the initial data
$(x_1,\dots,x_n)
 \in(\mathbb{R}^n\setminus W_n)\times\mathbb{R}^n$
for which:
\textit{i})
triple and more order particle collisions occur at the instant
$t\in(-\infty,+\infty)$;
\textit{ii})
the infinite number of collisions occurs within a finite time interval
\cite{CGP97,PG90}).
The evolution operator (\ref{Sspher}) is well defined as
$t\in(-\infty,+\infty)$
\cite{CGP97} under the conditions (\ref{PhiSpher}) on the
interaction potential
$\Phi$.

For the initial data
$G(0)\in C_\gamma$
in case of
$\gamma<e^{-1}$
a solution of the initial value problem to the dual BBGKY hierarchy is
represented  by the formula \cite{GR02umj,GRS04physA}
\begin{equation}
\label{Gstcum}
    G_{|Y|} (t, Y)
   =\sum\limits_{n=0}^s
        \sum\limits_{1=j_1<\ldots<j_{s-n}}^s\,
            \mathfrak{A}_{|X|+1}(t,Y\setminus X,\,X)\,
            G_{|Y\setminus X|}(0,Y\setminus X),
    \quad
    1\leq |X|=n\leq s,
\end{equation}
where the operator
$\mathfrak{A}_{|X|+1}(t,Y\setminus X,\,X)$
is a cumulant of $(|X|+1)$th order for the evolution operators
$S_{|Y_i|}(t,Y_i)$ (\ref{Sspher}),
$|Y_i|\geq1$,
determined by the formula \cite{GR02umj,GRS04physA}
\begin{equation}
\label{gU+}
    \mathfrak{A}_{|X|+1}(t,Y\setminus X,X)
   =\sum\limits_{\mathbf{P}:\,\{Y\setminus X,\,X\}=\bigcup\limits_iY_i}
        (-1)^{|\mathbf{P}|-1}\,
        (|\mathbf{P}|-1)!\,
        \prod_{Y_i\subset \mathbf{P}}
            S_{|Y_i|}(t,Y_i),
\end{equation}
where
$\sum\limits_\mathbf{P}$
is a sum over all possible partitions
$\mathbf{P}$
of the set
$Y\equiv\{Y\setminus X,\,X\}$
into
$|\mathbf{P}|$
nonempty mutually disjoint subsets
$Y_i\subset Y$,
$Y_i\!\bigcap\! Y_j\!=\!{\O},$ $i\neq j,$
and the subset
$Y\setminus X$
is treated as one element, i.~e.
$|Y|=|X|+1=n+1$. 
Let us define
$\mathfrak{A}_{|X|+1}(t,Y\setminus X,X)=0$
as
$Y=X$.
For example, the 1st and 2nd order cumulants have the form, correspondingly:
\addtocounter{equation}{-1}
\begin{subequations}
    \begin{align}
    \label{gU+a}
        &\mathfrak{A}_{1}(t,Y)
        =S_{s}(t,Y),               \\[3pt]
    \label{gU+b}
        &\mathfrak{A}_{2}(t,Y\setminus x_s,\, x_s)
        =S_{s}(t,Y)
        -S_{s-1}(t,Y\setminus x_s) \,
         S_1(t,x_s),
    \end{align}
\end{subequations}
where
$Y\setminus x_s\equiv\{x_1 \cup\ldots\cup x_{s-1}\}$
is treated as one element.

We note the cumulant can be defined on the set with elements being
finitely many-particle disjoint sets. For instance, the cumulant of 2nd
order defined on the two-element set with two elements as disjoint sets
$Y\setminus X$
and
$Z$
such that
$Y\setminus X\bigcap Z={\O}$
has the form
\begin{equation}
\label{gU+2}
         \mathfrak{A}_{2}(t,Y\setminus X,\, Z)
        =S_{|Y\setminus X|+|Z|}(t,Y\setminus X, Z)
        -S_{|Y\setminus X|}(t,Y\setminus X) \,
         S_{|Z|}(t,Z).$$
\end{equation}

\noindent
\begin{lemma}
\label{L:reg+}
   For the cumulant of $(n+1)$th order
   $\mathfrak{A}_{|X|+1}(t,Y\setminus X,\,X)$
   (\ref{gU+})
   the following representation is true \cite{GR02umj}:
    \begin{multline}
    \label{gU12+}
        \mathfrak{A}_{|X|+1}(t,Y\setminus X,\,X)
       =\sum\limits_{\substack{Z\subset X\\ Z\neq{\O}}}\,
            \mathfrak{A}_{2}(t,Y\setminus X,Z)
            \sum\limits_{\mathbf{Q}:\,X\setminus Z=\bigcup\limits_lX_l}
                (-1)^{|\mathbf{Q}|}\,
                |\mathbf{Q}|!\,
                \prod_{X_l\subset \mathbf{Q}}
                    \mathfrak{A}_{1}(t,X_l),\\
        \quad
        2\leq|X|=n\leq s,
    \end{multline}
    where
    $\sum\limits_{Z}$
    is a sum over all nonempty subsets
    $Z$
    of the set
    $X,$
    $\sum\limits_{\mathbf{Q}}$
    is a sum over all partitions
    $\mathbf{Q}$
    of the set
    $X\setminus Z$
    into
    $|\mathbf{Q}|$
    nonempty disjoint subsets
    $X_l\subset X\setminus Z,$
    $X_k\cap X_l=\O,\, k\neq l.$
\end{lemma}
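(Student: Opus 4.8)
The plan is to prove \eqref{gU12+} as a purely combinatorial identity in the convolution algebra of operators indexed by finite sets of particles, exploiting throughout the fact that evolution operators $S_{|A|}(t,A)$ acting on disjoint groups of variables commute. I abbreviate $D:=Y\setminus X$ for the distinguished cluster, which the partition in \eqref{gU+} treats as a single indivisible element, and I write $S(A):=S_{|A|}(t,A)$.

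First I would reorganize the defining sum \eqref{gU+} by isolating the block of each partition $\mathbf{P}$ that contains the distinguished element $D$. Every partition of the $(n+1)$-element set $\{D\}\cup X$ corresponds bijectively to a pair consisting of the subset $Z\subseteq X$ that shares $D$'s block (giving the block $\{D\}\cup Z$) and an ordinary partition $\mathbf{Q}$ of the remaining particles $X\setminus Z$. Since then $|\mathbf{P}|=|\mathbf{Q}|+1$ and the blocks are pairwise disjoint (so their operators commute and $S(D\cup Z)$ factors out), the coefficient $(-1)^{|\mathbf{P}|-1}(|\mathbf{P}|-1)!$ becomes $(-1)^{|\mathbf{Q}|}|\mathbf{Q}|!$, and I obtain
\[
\mathfrak{A}_{|X|+1}(t,D,X)=\sum_{Z\subseteq X}S(D\cup Z)\,R(X\setminus Z),
\]
where $R(A):=\sum_{\mathbf{Q}:\,A=\bigcup_l X_l}(-1)^{|\mathbf{Q}|}\,|\mathbf{Q}|!\,\prod_l\mathfrak{A}_{1}(t,X_l)$ and $R(\emptyset):=I$. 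The inner factor $R(X\setminus Z)$ is already exactly the alternating sum over partitions appearing in \eqref{gU12+}.

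Second, the crux is the auxiliary identity $\sum_{Z\subseteq X}S(Z)\,R(X\setminus Z)=0$ for every nonempty $X$ (with $S(\emptyset)=I$). I would establish this by recognizing $R$ as the inverse of $S$ under the binomial convolution $(a*b)(X):=\sum_{Z\subseteq X}a(Z)\,b(X\setminus Z)$, whose unit $\delta$ is supported on $\emptyset$. Writing $S=\delta+S_{+}$ with $S_{+}$ vanishing on $\emptyset$, the geometric inverse is $\sum_{k\geq0}(-1)^{k}S_{+}^{*k}$; expanding $S_{+}^{*k}(X)$ over ordered partitions of $X$ into $k$ nonempty blocks and using commutativity to group the $k!$ orderings of each unordered partition reproduces precisely $R(X)$. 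Hence $S*R=\delta$, which is the claimed identity.

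Finally I would combine the two steps. Adding and subtracting $S(D)S(Z)$ inside the first sum gives
\[
\mathfrak{A}_{|X|+1}(t,D,X)=\sum_{Z\subseteq X}\big(S(D\cup Z)-S(D)S(Z)\big)R(X\setminus Z)+S(D)\sum_{Z\subseteq X}S(Z)R(X\setminus Z).
\]
The second sum vanishes by the auxiliary identity (here $|X|=n\geq2>0$), the $Z=\emptyset$ term of the first sum vanishes because $S(D)-S(D)S(\emptyset)=0$, and for $Z\neq\emptyset$ the bracket is exactly the second-order cumulant $\mathfrak{A}_{2}(t,D,Z)$ of \eqref{gU+2}; this yields \eqref{gU12+}. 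The main obstacle is the bookkeeping of the distinguished cluster $D$, which counts as one element for partition-enumeration but is a genuine group of variables inside $S(D\cup Z)$, together with the clean identification of $R$ as a convolution inverse. \emph{Non-commutativity is not an issue} because all operators in sight act on pairwise disjoint variables.
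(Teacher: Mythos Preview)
Your argument is correct. The paper itself does not give a proof of this lemma; it simply records the identity and attributes it to \cite{GR02umj}, so there is nothing to compare your approach against here.

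For the record, each of your three steps is sound. The bijection between partitions $\mathbf P$ of $\{D\}\cup X$ and pairs $(Z,\mathbf Q)$ with $Z\subseteq X$ and $\mathbf Q$ a partition of $X\setminus Z$ is the standard ``isolate the block of the marked element'' manoeuvre, and the bookkeeping $|\mathbf P|=|\mathbf Q|+1$ is right. The identification of $R$ with the convolution inverse of $S$ is clean: because $S_{+}$ is nilpotent on any fixed finite $X$ (indeed $S_{+}^{*k}(X)=0$ once $k>|X|$), the geometric series terminates and no analytic issue arises; commutativity of operators on disjoint variable groups is exactly what lets you collapse the $k!$ orderings of an ordered $k$-tuple into a single unordered partition with weight $k!$. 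The final add--and--subtract step is then immediate, and the $Z=\varnothing$ term drops out for the reason you state. One small remark: your argument in fact works already for $|X|=n\ge 1$, with the case $n=1$ reducing to the tautology $\mathfrak A_2=\mathfrak A_2$; the restriction $n\ge 2$ in the lemma is only to exclude the cases covered separately by \eqref{gU+a}--\eqref{gU+b}. Also note that the term $Z=X$ must be read with the convention that the empty partition of $X\setminus Z=\varnothing$ contributes $1$, so that $R(\varnothing)=I$; this matches how the paper uses \eqref{gU12+} downstream.
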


We apply Lemma \ref{L:reg+} in constructing a new representation for the
solution of the initial value problem to the dual BBGKY hierarchy.

With representation (\ref{gU12+}) for solution (\ref{Gstcum}) in mind
the function
$G_{|Y\setminus X|}(0,Y\setminus X)$
proved to not depend on the set of variables
$X_l\subset X\setminus Z$.
Since
$X_l\subset\!\!\!\!\!\!/\, Y\setminus X$,
then the cumulants of the 1st order
$\mathfrak{A}_{1}(t,X_l)$
do not act on the variables of the function
$G_{|Y\setminus X|}(0,Y\setminus X)$.
Therefore an identical expression to (\ref{Gstcum}) is as follows:
\begin{multline*}
\label{Gst120}
    G_{|Y|} (t, Y)
   =\mathfrak{A}_{1}(t,Y)\,
    G_{|Y|}(0,Y)+\\
   +\sum\limits_{n=1}^s\,
    \sum\limits_{1=j_1<\ldots<j_{s-n}}^s\,
    \sum\limits_{\substack{Z\subset X\\ Z\neq{\O}}}\,
        \mathfrak{A}_{2}(t,Y\setminus X,Z)\,
        G_{|Y\setminus X|}(0,Y\setminus X)
        \sum\limits_{\mathbf{Q}:\,X\setminus Z=\bigcup\limits_lX_l}\,
            (-1)^{|\mathbf{Q}|}\,
            |\mathbf{Q}|!,\\
    \quad
    1\leq |X|=n\leq s.
\end{multline*}
\noindent

From last one, due to  the equalities
\begin{equation*}
    \sum\limits_{\mathbf{Q}:\,X\setminus Z=\bigcup\limits_lX_l}
        (-1)^{|\mathbf{Q}|}\,
        |\mathbf{Q}|!
   =\sum\limits_{k=1}^{|X\setminus Z|}
        (-1)^k k!\,
        s(|X\setminus Z|,k),
\end{equation*}
and
 \begin{equation*}
 \label{(-1)^m}
    \sum\limits_{k=1}^m\,
    (-1)^k\,
    k!\,
    s(m,k)=
    (-1)^m,
    \quad
    m\geq1,
\end{equation*}
where
$s(|X\setminus Z|,k)\equiv s(m,k)$
are Stirling numbers of the second kind,
it follows that an equivalent representation to solution (\ref{Gstcum}) of
the initial value problem to the dual BBGKY hierarchy is an expansion
({\it a regularized solution})
\begin{multline}
\label{Greg}
    G_{|Y|} (t, Y)
   =\mathfrak{A}_{1}(t,Y)\,
    G_{|Y|}(0,Y)
   +\sum\limits_{n=1}^s\,
    \sum\limits_{1=j_1<\ldots<j_{s-n}}^s\,
    \sum\limits_{\substack{Z\subset X\\ Z\neq{\O}}}\,
        (-1)^{|X\setminus Z|}\,
        \mathfrak{A}_{2}(t,Y\setminus X,Z)\,
        G_{|Y\setminus X|}(0,Y\setminus X),\\
    \quad
    1\leq |X|=n\leq|Y|=s.
\end{multline}

Thus, for
$G(0)\in C_{\gamma,0}$
cumulant representation (\ref{Gstcum})-(\ref{gU+}) for a solution of the
initial value problem to the BBGKY hierarchy is equivalent to regularized
one (\ref{Greg}) and its structure allows to compensate divergent terms appearing in functional (\ref{<Gt>b}) under
$F(0)\in L_{\xi,\beta}^\infty$.

\noindent
\begin{lemma}
\label{L:reg+<=}
    If
    $G(0)\in C_\gamma,$
    then under condition
    $0<\gamma <1$
    for expansion (\ref{Greg}) the estimate
    \begin{equation}
    \label{|G|<}
        \big|
            {G_{|Y|}(t,Y)}
        \big|
   \leq 2\,e^2\,
        \|G(0)\|_{C_{\gamma}}\,
        \frac{s!}{\gamma^{s}}
   \end{equation}
   is valid.
\end{lemma}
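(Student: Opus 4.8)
The plan is to estimate the regularized solution (\ref{Greg}) term by term, exploiting three facts: that the evolution operators (\ref{Sspher}) are contractions in the supremum norm, that the $C_\gamma$-norm controls the initial data, and that the outer sums in (\ref{Greg}) contain only a controllable number of terms. First I would record the two basic operator bounds. Since $S_k(t,\cdot)$ merely transports a function along the phase flow (or sets it to zero), it does not increase the supremum, so $\sup|S_k(t,Y)f_k(Y)|\le\sup|f_k|$. Consequently the first-order cumulant $\mathfrak{A}_1(t,Y)=S_s(t,Y)$ contributes a factor $1$, while the second-order cumulant (\ref{gU+2}), being the difference of two such contractions, contributes at most a factor $2$; crucially this bound is uniform in the subset $Z$, since $G_{|Y\setminus X|}(0,Y\setminus X)$ does not depend on the variables of $Z$.

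Next I would insert the elementary consequence of the norm definition, namely $\sup_{x_1,\dots,x_m}|G_m(0,\cdot)|\le (m!/\gamma^m)\,\|G(0)\|_{C_\gamma}$ for every $m$. Applying this with $m=s$ to the leading term gives $|\mathfrak{A}_1(t,Y)G_{|Y|}(0,Y)|\le (s!/\gamma^s)\|G(0)\|_{C_\gamma}$, and applying it with $m=|Y\setminus X|=s-n$ to each summand bounds it by $2\,((s-n)!/\gamma^{s-n})\|G(0)\|_{C_\gamma}$, independently of the particular sets $X$ and $Z$.

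The remaining work is purely combinatorial counting of how many such summands occur. For fixed $n$ there are $\binom{s}{n}$ ways to split $Y$ into the distinguished cluster $Y\setminus X$ and the $n$-element set $X$, and for each of these there are $2^n-1$ nonempty subsets $Z\subseteq X$. Using the identity $\binom{s}{n}(s-n)!=s!/n!$, the total contribution of level $n$ collapses to $2(2^n-1)\,(s!/\gamma^s)\,(\gamma^n/n!)\,\|G(0)\|_{C_\gamma}$, so that
\begin{equation*}
    \big|G_{|Y|}(t,Y)\big|
    \le \frac{s!}{\gamma^s}\,\|G(0)\|_{C_\gamma}
    \Bigg(1+2\sum_{n=1}^{s}(2^n-1)\frac{\gamma^n}{n!}\Bigg).
\end{equation*}

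Finally I would bound the series. Since $(2^n-1)\gamma^n\le(2\gamma)^n$ and $0<\gamma<1$, one has $\sum_{n=1}^{s}(2^n-1)\gamma^n/n!\le\sum_{n=1}^{\infty}(2\gamma)^n/n!=e^{2\gamma}-1<e^2-1$, whence the parenthesis is at most $1+2(e^2-1)=2e^2-1<2e^2$, which yields (\ref{|G|<}). I expect the only real subtlety to be the second step: one must recognise that the regularization (\ref{Greg}) replaces the high-order cumulants of (\ref{Gstcum}) --- whose norms grow factorially with the number of partitions in (\ref{gU+}) --- by second-order cumulants alone, each carrying the uniform factor $2$. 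It is precisely this feature that turns the sum over partitions into the benign weight $2^n/n!$ and makes the series converge; everything else is bookkeeping.
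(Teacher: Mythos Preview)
Your proposal is correct and follows essentially the same route as the paper's proof: bound $\mathfrak{A}_1$ by $1$ and $\mathfrak{A}_2$ by $2$ via the contraction property of $S_k(t)$, insert the $C_\gamma$-norm bound $|G_m(0,\cdot)|\le (m!/\gamma^m)\|G(0)\|_{C_\gamma}$, count the $\binom{s}{n}$ choices of $X$ and the subsets $Z\subset X$, and sum $2^n/n!\le e^2$. Your bookkeeping is in fact slightly tighter than the paper's --- you retain the factor $\gamma^n$ (which the paper discards using $0<\gamma<1$) and use $2^n-1$ rather than $2^n$, yielding the sharper constant $2e^{2\gamma}-1<2e^2$ --- but the argument is the same.
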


\noindent
\begin{proof}
    Let
    $G(0)\in C_{\gamma}.$
    According to formulae (\ref{Sspher}) and (\ref{gU+a}), (\ref{gU+b}) the
    following inequalities hold:
    \begin{equation}
    \label{gU+1<}
        \big|\mathfrak{A}_{1}(t,Y)\,
            G_{|Y|}(0,Y)
        \big|
        \leq
        \|G(0)\|_{C_{\gamma}}\,
        \frac{|Y|!}{\gamma^{|Y|}}
    \end{equation}
    and
    \begin{equation}
    \label{gU+2<}
        \big|\mathfrak{A}_{2}(t,Y\setminus X,Z)\,
            G_{|Y\setminus X|}(0,Y\setminus X)
        \big|
        \leq
        2\,\|G(0)\|_{C_{\gamma}}\,
        \frac{|Y\setminus X|!}{\gamma^{|Y\setminus X|}}.
    \end{equation}

    From (\ref{gU+1<}) and (\ref{gU+2<}) for expansion (\ref{Greg}) we get the
    following estimate
    \begin{equation}
    \label{|G|<<}
        \big|
            {G_{|Y|}(t,Y)}
        \big|
   \leq 2\,
        ||G(0)||_{C_\gamma}\,
        \sum\limits_{n=0}^s\,
            \sum\limits_{1=j_1<\ldots<j_{s-n}}^s\,
                2^{n}\,
                \frac{(s-n)!}{\gamma^{s-n}}.
    \end{equation}
    Since
    $0<\gamma<1$,
    $\sum\limits_{1=j_1<\ldots<j_{s-n}}^s
     1\,
    =\frac{s!}{(s-n)!n!}
    $
    and in consequence of the inequality
       $\sum\limits_{n=0}^{s}
        \frac{2^n}{n!}
        \leq
        e^2$,
    estimate (\ref{|G|<<}) takes form (\ref{|G|<}) of Lemma \ref{L:reg+<=}.
\end{proof}

\medskip
\section{The mean value of observables: the evolution of observables}
\setcounter{equation}{0}
\setcounter{proposition}{0}
We establish the existence of functional (\ref{<Gt>b}) in the case of the evolution of the observable determined by the regularized expansion (\ref{Greg}).
\begin{proposition}
\label{P:<Gt|F0>}
    For a system of particles with a pair interaction potential
    $\Phi$
    satisfying (\ref{PhiSpher}) if \linebreak
    $F(0)\in L_{\xi,\beta}^\infty$,
    $G(0)\in C_{\gamma,0}$ 
    and obserbvable 
    $G(t)$ 
    is determined by expansions (\ref{Greg}), then for
    \begin{equation}
    \label{xi}
        \xi
       <\frac{\gamma}{e\,\max{(C;2\tilde{C_1})}}\,
        \sqrt{\frac{\beta''}{2\pi}}
        \qquad
        \text{and}
        \qquad
        0\leq t<t_0,
        \quad
        t_0
        \equiv
        \frac{1}{\tilde{C_2}}\,
        \bigg(-\tilde{C_1}
              +\frac{\gamma}{2e\,\xi}\sqrt{\frac{\beta''}{2\pi}}
        \bigg),
    \end{equation}
    functional (\ref{<Gt>b}) is well-defined  and the following estimate holds:
    \begin{equation}
    \label{b<TH}
        \Big|\langle G(t)|F(0)\rangle\Big|
       \leq 2\,
        e^{C}\,
        ||{F(0)}||_{L_{\xi,\beta}^{\infty}}\,
        ||G(0)||_{C_\gamma}\,
        \sum\limits_{s=0}^{\infty}
            \bigg(\frac{e\,C\,\xi}{\gamma}\,\sqrt{\frac{2\pi}{\beta''}}
            \bigg)^s\,
        \sum\limits_{n=0}^{\infty}
            \bigg(\frac{2\,e\,\xi}{\gamma}\,\sqrt{\frac{2\pi}{\beta''}}
            \bigg)^n\,
        \big(
            \tilde{C}_1+\tilde{C}_2t
        \big)^n,
    \end{equation}
    where
    $C=\max_{i=j_1,\ldots,j_{s-n}}\big|l_{i}(0)\big|$
    and
    $\big|l_{i}(0)\big|$
    is a length of the interval
    $l_i(0)$
    from compact
    $\Omega_{|Y\setminus X|}(0)=
     \linebreak
    =l_{j_1}(0)
     \times\ldots\times
     l_{j_{|Y\setminus X|}}(0)$
    on which the function
    $G_{s-n}(0)$
    is supported,
    $\tilde{C_1}=\max(2R,1),$
    \linebreak
    $\tilde{C_2}=
     \max\big(2(4b+1),\frac{2}{\beta^{\prime}}\big),$
    $\beta=
     \beta^\prime+\beta^{\prime\prime},$
    $b\equiv
     \underset{q\in[\sigma,R]}
     {\sup}\big|\Phi(q)\big|
     \big(\big[\frac{R}{\sigma}\big]\big)$
    and
    $\big[\frac{R}{\sigma}\big]$
    is an integer part of the number~$\frac{R}{\sigma}$.
\end{proposition}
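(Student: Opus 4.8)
The plan is to substitute the regularized representation (\ref{Greg}) for $G_{|Y|}(t,Y)$ directly into functional (\ref{<Gt>b}) and to bound the resulting expression, organized by the size of the distinguished cluster $Y\setminus X$ and by the number $n=|X|$ of remaining particles. I would treat the leading term $\mathfrak{A}_1(t,Y)\,G_{|Y|}(0,Y)=S_s(t,Y)\,G_s(0,Y)$ (the $n=0$ contribution) separately from the genuine two-cluster terms built from $\mathfrak{A}_2(t,Y\setminus X,Z)$. The naive pointwise bound of Lemma~\ref{L:reg+<=} is insufficient here, because it carries no information on the configuration support and the integral of $|F_s(0,Y)|$ over the configuration variables diverges for $F(0)\in L_{\xi,\beta}^\infty$; so the cluster-resolved bounds (\ref{gU+1<}) and (\ref{gU+2<}) must be used together with the geometry of the support.

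The decisive structural observation, which explains why (\ref{Greg}) compensates these divergences, is that $S_{|Z|}(t,Z)$ acts trivially on $G_{|Y\setminus X|}(0,Y\setminus X)$, since the latter is independent of the $Z$-variables; hence $\mathfrak{A}_2(t,Y\setminus X,Z)\,G_{|Y\setminus X|}(0,Y\setminus X)$ reduces to the difference of $G_{|Y\setminus X|}(0,\cdot)$ evaluated along the cluster trajectory computed with, and respectively without, the interaction with the particles of $Z$. Summing over $Z\subseteq X$ with the signs $(-1)^{|X\setminus Z|}$ of (\ref{Greg}) turns this into a complete finite difference in the $n$ variables of $X$, which vanishes unless every particle of $X$ is dynamically connected to the cluster $Y\setminus X$ on $[0,t]$. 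Thus the would-be divergent configuration integrals over the particles of $X$ collapse onto the interaction region, and I would invoke the method of the interaction region of \cite{R06SIGMA} to estimate its measure: each of the $n$ particles of $X$ is confined, in its configuration variable, to a set of Lebesgue measure at most $\tilde{C}_1+\tilde{C}_2t$, with $\tilde{C}_1=\max(2R,1)$ accounting for the initial range of forces and $\tilde{C}_2t$ for its linear-in-time spreading, governed by free flight (the $2/\beta'$ velocity scale) and by the momentum transfer during collisions (the $4b+1$ contribution).

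With the configuration geometry under control I would carry out the momentum integrations through the splitting $\beta=\beta'+\beta''$: the factor $\exp\{-\beta'\sum p_i^2/2\}$ is spent to convert the time-dependent displacements entering the interaction region into the constants absorbed in $\tilde{C}_2$, while $\exp\{-\beta''\sum p_i^2/2\}$ is kept for the final Gaussian integration, contributing $\sqrt{2\pi/\beta''}$ per particle. Combining the cumulant bounds (\ref{gU+1<})--(\ref{gU+2<}), the estimate $|F_s(0,Y)|\le\|F(0)\|_{L_{\xi,\beta}^\infty}\,\xi^s\exp\{-\beta\sum p_i^2/2\}$, the support length $C=\max_i|l_i(0)|$ of $G_{s-n}(0)$ along the cluster variables, and the factors $2^n$ from (\ref{gU+2<}), I would resum the combinatorial prefactors. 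The count of cluster choices (with $j_1=1$ fixed), the prefactor $1/s!$, and the factorial $|Y\setminus X|!$ from (\ref{gU+2<}) combine; passing to the cluster size as the outer summation variable decouples the bound into two independent geometric series, and Stirling-type estimates of the form $\sum_m 2^m/m!\le e^2$ and $\sum_m C^m/m!\le e^C$ produce the ratios $\tfrac{eC\xi}{\gamma}\sqrt{2\pi/\beta''}$ and $\tfrac{2e\xi}{\gamma}\sqrt{2\pi/\beta''}\,(\tilde{C}_1+\tilde{C}_2t)$ together with the prefactor $2e^C$ appearing in (\ref{b<TH}).

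Well-definedness and the estimate then follow by requiring both series to converge: the cluster series demands $\tfrac{eC\xi}{\gamma}\sqrt{2\pi/\beta''}<1$, while at fixed $t$ the connected-particle series demands $\tfrac{2e\xi}{\gamma}\sqrt{2\pi/\beta''}\,(\tilde{C}_1+\tilde{C}_2t)<1$; solving the first for $\xi$ and the second for $t$ yields exactly the smallness condition $\xi<\tfrac{\gamma}{e\max(C;2\tilde{C}_1)}\sqrt{\beta''/2\pi}$ (which also guarantees $t_0>0$) and the range $0\le t<t_0$ of (\ref{xi}). I expect the main obstacle to be the interaction-region estimate of the second paragraph --- showing that the connectedness forced by the finite-difference structure confines each extra particle to a configuration set of measure at most $\tilde{C}_1+\tilde{C}_2t$, with the explicit dependence on $R$, the potential bound $b$, and the partial inverse temperature $\beta'$ --- since this is where the hard-core short-range dynamics (\ref{PhiSpher}) and the control of collision-induced momentum changes enter, and it is the step that genuinely relies on the technique of \cite{R06SIGMA} rather than on the algebraic identities already established in Section~2.
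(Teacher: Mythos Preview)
Your strategy is the paper's: substitute the regularized expansion (\ref{Greg}) into (\ref{<Gt>b}), use the vanishing of the $\mathfrak{A}_2$-sum outside an interaction region to localize the configuration integrals, split $\beta=\beta'+\beta''$ to trade part of the Gaussian weight against momentum-dependent displacements, and resum into two geometric series whose convergence yields (\ref{xi}). Your finite-difference explanation of why the alternating sum over $Z\subset X$ vanishes unless every particle of $X$ is dynamically connected to $Y\setminus X$ is in fact sharper than what the paper writes out; the paper simply invokes (\ref{gU+G=0}) and then asserts the confinement.

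There is, however, a real imprecision in your interaction-region bound that would spoil the constants. The confinement length for each $X$-particle is \emph{not} of size $\tilde C_1+\tilde C_2 t$; the paper derives (see (\ref{lt<<}) and (\ref{V+<=}))
\[
|l_{|X|}(t)|\le C+s\,(C_1+C_2 t)+t\sum_{i=1}^{s}p_i^2,
\]
so the per-particle length grows linearly in the \emph{total} particle number $s$ and depends on all the momenta, because any of the $s$ particles can relay momentum to the cluster through a chain of collisions. One then expands $(C+s(C_1+C_2t)+t\sum p_i^2)^n$ multinomially, uses $(\sum p_i^2)^m e^{-\beta'\sum p_i^2/2}\le m!\,(2/\beta')^m$, and bounds $\sum_k C^k/k!\le e^C$ and $\sum_r s^r/r!\le e^s$ to reach (\ref{b<<<}); the inversion of the summation order then gives (\ref{b<TH}). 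It is the factor $e^s$ --- coming precisely from the $s$-dependence of the interaction region --- that produces the $e$ in the ratio $eC\xi/\gamma$ (and, after inversion, also the $e$ in the $n$-series ratio); the estimate $\sum_m 2^m/m!\le e^2$ plays no role here, and $e^C$ enters only as the constant prefactor. With the $s$-independent confinement you wrote, neither the estimate (\ref{b<TH}) nor the smallness conditions (\ref{xi}) would come out as stated.
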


\begin{proof}
    Let
    $F(0)\in L_{\xi,\beta}^\infty$,
    $G(0)\in C_{\gamma,0}$,
    then the observable
    $G_{|Y\setminus X|}(0,Y\setminus X)$
    as
    $Y\setminus X=(x_{j_1},\ldots,x_{j_{s-n}})$ \linebreak[4]
    is supported in the configuration space on the compact set which we denote by \linebreak[4]
    $\Omega_{|Y\setminus X|}(0)=
    {l_{j_1}(0)\times\ldots\times l_{j_{|Y\setminus X|}}(0)},$
    where
    $l_{i}(0)$
    segment with the length
    $|l_{i}(0)|$
    such that
    \begin{math}
    q_i\in l_{i}(0),
    \linebreak[4]
    i=j_1,\ldots,j_{s-n}.
    \end{math}
    If
    $n=0$
    the compact comes as
    $\Omega_{|Y|}(0)={l_1(0)\times\ldots\times l_{|Y|}(0)}$
    (see p.~\pageref{A}).

    If
    $G_s(t,x_1,\ldots,x_s)
     \equiv
     G_{|Y|}(t, Y)$
    is a solution of the dual BBGKY hierarhy \cite{R06SIGMA}
    determined by expansion (\ref{Greg}), then the expression for
    functional (\ref{<Gt>b}) has the form
    \begin{multline}
        \label{reg<Gt>b}
        \langle G(t)|F(0) \rangle
       =\sum\limits_{s=0}^{\infty}\,
            \frac{1}{s!}
            \int\limits_{(\mathbb{R}^s\setminus W_s)\times\mathbb{R}^s}
            dY\,
            \Big(\mathfrak{A}_{1}(t,Y)\,
                G_{|Y|}(0,Y)+\\
            +\sum\limits_{n=1}^s\,
                    \sum\limits_{1=j_1<\ldots<j_{s-n}}^s \,
                        \sum\limits_{\substack{Z\subset X\\ Z\neq{\O}}}\,
                            (-1)^{|X\setminus Z|}\,
                            \mathfrak{A}_{2}(t,Y\!\setminus X,Z)\,
                            G_{|Y\setminus X|}(0,Y\!\setminus X)
            \Big)\,
            F_{|Y|}(0,Y),\\
        \qquad
        1\leq |X|=n\leq|Y|=s.
    \end{multline}

    In function (\ref{reg<Gt>b}) the region
    $\mathbb{R}^s\setminus W_s$
    of integration with respect to the configuration variables is restricted to a domain within which the integrand is finite and nonzero.
    Since the 1st order cumulant
    $\mathfrak{A}_{1}(t,Y)$
    is the  operator
    $S_{|Y|}(t,Y)$ (\ref{Sspher}),
    therefore in expansion (\ref{reg<Gt>b}) the function \linebreak[4]
    $\mathfrak{A}_{1}(t,Y)\,G_{|Y|}(0,Y)\,F_{|Y|}(0,Y)$
    is integrated not over the whole configuration space, and only with respect
    to variables of such a compact
    $\Omega_{|Y|}(0)\subset\mathbb{R}^s\setminus W_s$,
    that is shifted along the configuration trajectory with the finite volume
    $V_{\Omega_{|Y|}}(0)
    ={l_{|Y|}(0)}^s$.

    The expression
    \begin{equation*}
        \sum\limits_{n=1}^s\,
            \sum\limits_{1=j_1<\ldots<j_{s-n}}^s\,
                \sum\limits_{\substack{Z\subset X\\ Z\neq{\O}}}\,
                    (-1)^{|X\setminus Z|}\,
                    \mathfrak{A}_{2}(t,Y\setminus X,Z)\,
                    G_{|Y\setminus X|}(0,Y\setminus X)\,
                    F_{|Y|}(0,Y)
    \end{equation*}
    equals zero if within time interval
    $[0,t)$
    particles with arbitrary initial data
    $X$
    from
    $\mathbb{R}^n\setminus W_n$
    don't interact with particles with fixed initial data
    $Y\setminus X$,
    since  in this case for the 2nd order cumulant
    $\mathfrak{A}_{2}(t,Y\setminus X,Z)$ (\ref{gU+2})
    as
    $Z\subset X$,
    $Z\neq{\O}$,
    the following property
    \begin{equation}
    \label{gU+G=0}
        \mathfrak{A}_{2}(t,Y\setminus X,Z)\,
        G_{|Y\setminus X|}(0,Y\setminus X)
       =0
    \end{equation}
    \medskip
    is true.
    Owing to the finiteness of potential
    $\Phi$ (\ref{PhiSpher})
    within time interval
    $[0,t)$
    interaction (elastic collision) can not occur between particles at
    a distance more than can be accomplished by these particles in total.
    By this means, in expansion (\ref{reg<Gt>b}) the region of integration
    over the configuration variables is bounded by {\it an~interaction~region}
    $\Omega_{|X|}\big(t,\Omega_{|Y\setminus X|}(0)\big)$
    of the particles with arbitrary initial data
    $X$
    with the particles with fixed initial data
    $Y\setminus X\in\Omega_{|Y\setminus X|}(0)$,
    within time interval
    $[0,t)$
    and has a finite volume
    $V_{\Omega_{|X|}}(t)$.
    We note that for all
    $Z\subset X$
    it is true
    $\Omega_{|Z|}(t)\subset\Omega_{|X|}(t)$.

    Indeed, since at the initial moment
    $t=0$
    an arbitrary $i$th particle with the phase coordinate from
    $Y\setminus X$,
    $i=j_1,\ldots,j_{|Y\setminus X|},$
    has a fixed values of momentum
    $p_i$
    and the localized configuration coordinate
    $q_i$,
    then at an arbitrary instant
    $\tau$, $\tau\in[0,t)$,
    momentum
    $p_i(\tau,Y\setminus X)$
    is determined from the initial value problem to the Hamilton
    equations \cite{CGP97} for a system of the finite number
    $|Y\setminus X|$
    of particles.   Therefore within
    $[0,t)$
    such $i$th particle accomplishes a distance
    $\int\limits_0^td\tau\,\big|p_i(\tau,Y\setminus X)\big|$.
    In total
    $|Y\setminus X|$
    particles accomplish the following one
    $\sum\limits_{x_i\in Y\setminus X}
          \int\limits_0^t
          d\tau\,
          \big|p_i(\tau,Y\setminus X)\big|$,
    and
    $|Y|=s$
    particles do
    $\sum\limits_{i=1}^s
          \int\limits_0^t
          d\tau\,
          \big|p_i(\tau,Y)\big|$.
    Let us suppose that the interval
    $l_{|Y\setminus X|}(0)$
    is such that
    $\big|l_{|Y\setminus X|}(0)\big|=
     \max_{i=j_1,\ldots,j_{|Y\setminus X|}}{\big|l_{i}(0)\big|}$.
    Taking into account the range of interaction
    $R$ we single out orderly
    $|Y|$
    segments with the length
    $R+\int\limits_0^td\tau\,\big|p_i(\tau,Y)\big|,$
    $i=j_1,\ldots,j_{s}$,
    to the left and to the right of the interval
    $l_{|Y\setminus X|}(0)$.
    We obtain the interval denoted by
    $l_{|X|}(t)$
    with the length
    $\big|l_{|X|}(t)\big|$
    such that
    \begin{equation*}
     \underbrace{l_{|X|}(t)\times\ldots\times l_{|X|}(t)}\limits_n
     \times
     \Omega_{|Y\setminus X|}(0)
    =\Omega_{|X|}\big(t,\Omega_{|Y\setminus X|}(0)\big)
    \end{equation*}
    and
    \begin{equation}
    \label{lt<}
        \big|l_{|X|}(t)\big|
        \leq
        \big|l_{|Y\setminus X|}(0)\big|
       +2sR
       +2\sum\limits_{i=1}^s
        \int\limits_0^t
        d\tau\,
        \big|p_i(\tau,Y)\big|.
    \end{equation}

    For the pair interaction potential
    $\Phi$
    satisfying conditions (\ref{PhiSpher}) the following inequality is fulfilled:
    \begin{equation}
    \label{Phi<=}
        \Big|\sum_{i<j=1}^{s}\Phi(q_i-q_j)\Big|\leq bs,
    \end{equation}
    where
    $b\equiv \underset{q\in[\sigma,R]}{\sup}\big|
     \Phi(q)\big|\big(\big[\frac{R}{\sigma}\big]\big),$
    $\big[\frac{R}{\sigma}\big]$
    is an integer part of the number    
    $\frac{R}{\sigma}$.

    Because of the inequality
    \begin{equation*}
        2|p_i(\tau)|\leq p_i^2(\tau)+1,
    \end{equation*}
    for an arbitrary value of momentum
    $p_i(\tau)\equiv p_i(\tau,Y)$
    the law of conservation of energy
    \begin{equation*}
        \sum\limits_{i=1}^{s}\,
            \frac{p_i^{2}}{2}\,
        \sum\limits_{i<j=1}^{s}\,
            \Phi(q_i-q_j)
       =\sum\limits_{i=1}^{s}\,
            \frac{p_i^{2}(\tau)}{2}\,
        \sum\limits_{i<j=1}^{s}\,
            \Phi\big(q_i(\tau)-q_j(\tau)\big)
    \end{equation*}
    and conditions (\ref{Phi<=}) imply upper boundedness of the sum of
    momenta at the instant
    $\tau$:
    \begin{equation}
    \label{p<}
       2\sum\limits_{i=1}^{s}\,
            p_i(\tau)
       \leq
        \sum\limits_{i=1}^{s}\,
            p_i^2
           +(4b+1)\,
            s,
    \end{equation}
    where
    $b$
    is determined by condition (\ref{Phi<=}).

    In consideration of (\ref{p<}), from (\ref{lt<}) we derive
    \begin{equation}
    \label{lt<<}
         \big|l_{|X|}(t)\big|
         \leq
         \big|l_{|Y\setminus X|}(0)\big|
        +2s\big(R+(4b+1)t\big)+
         t\sum\limits_{i=1}^s
         p_i^2.
    \end{equation}

    Let us assume
    $C=\max_{i=j_1,\ldots,j_{s-n}}|l_{i}(0)|$,
    $C_1\equiv 2R$,
    $C_2\equiv2(4b+1)$
    and
    $b$
    is determined by condition (\ref{Phi<=})
    then volume
    $V_{\Omega_{|X|}}(t)$
    of the interaction region
    $\Omega_{|X|}\big(t,\Omega_{|Y\setminus X|}(0)\big)$
    is finite, namely,
    \begin{equation}
    \label{V+<=}
        V_{\Omega_{|X|}}(t)
        \leq\Big(C
           +(C_1+C_2t)s
           +t\sum_{i=1}^{s}{p_i^2}
           \Big)^n
        C^{s-n}.
    \end{equation}

    Thus equality (\ref{gU+G=0}) is satisfied in $s$th term of series
    (\ref{reg<Gt>b}) over integration with respect to the configuration variables
    located exterior to the interaction region
    $\Omega_{|X|}(t,\Omega_{|Y\setminus X|}(0))$.
    As to the structure
    $\Omega_{|Y|}(0)
     \equiv
     \Omega_{|X|}(t,\Omega_{|Y\setminus X|}(0))$
    as
    $|X|\equiv n=0$.

    Taking into  account the cumulant property (\ref{gU+G=0}), by
    Lemma~\ref{L:reg+<=} and inequality~(\ref{V+<=}) for functional~(\ref{reg<Gt>b})
    the following estimate holds:
    \begin{multline}
        \label{b<}
           \Big|\langle G(t)|F(0)\rangle
           \Big|
          \leq 2\,
           ||{F(0)}||_{L_{\xi,\beta}^{\infty}}\,
           ||G(0)||_{C_\gamma}\,
           \sum\limits_{s=0}^{\infty}\,
           \frac{\xi^s}{\gamma^s}\,
           \int\limits_{\mathbb{R}^s}
                dp_1\ldots dp_s\,
                \exp\Big\{-\beta
                           \sum\limits_{i=1}^{s}\,
                           \frac{p_i^2}{2}
                    \Big\}\,
           \sum\limits_{n=0}^{s}\,
                \frac{2^n}{n!}
           \times\\
           \times
           \Big(C
                +(C_1+C_2t)\,s
                +t\sum_{i=1}^{s}{p_i^2}
           \Big)^n\,
           C^{s-n}.
    \end{multline}

    In view of the relation
    \begin{equation*}
            \Big(C
                +(C_1+C_2t)\,s
                +t\sum\limits_{i=1}^{s}\,
                      p_i^2
            \Big)^{n}
           =\sum_{k=0}^n\,
               \frac{n!\,C^k}{k!}\,
            \sum_{r=0}^{n-k}\,
               \frac{s^r}{r!}\,
               (C_1+C_2t)^r\,
               \frac{t^{n-k-r}}{(n-k-r)!}\,
            \Big(\sum\limits_{i=1}^{s}
                     p_i^2
            \Big)^{n-k-r}
    \end{equation*}
    and the inequality
    \begin{equation}
    \label{pexp<=}
        \Big(\sum\limits_{i=1}^s\,
                p_i^2
        \Big)^{n-k-r}\,
        \exp\Big\{-\beta^\prime\,
                  \sum\limits_{i=1}^s\,
                      \frac{p_i^2}{2}
            \Big\}
        \leq
        (n-k-r)!
        \left(\frac{2}{\beta^\prime}\right)^{n-k-r}
    \end{equation}
    calculating the integrals over momentum variables in expression (\ref{b<})
    we get
    \begin{multline}
    \label{b<<}
        \Big|
            \langle
                G(t)|F(0)
            \rangle
        \Big|
        \leq 2\,
        ||F(0)||_{L_{\xi,\beta}^{\infty}}\,
        ||G(0)||_{C_\gamma}\,
        \sum\limits_{s=0}^{\infty}\,
            \bigg(
                \frac{C\xi}{\gamma}
            \bigg)^s\,
            \bigg(
                \frac{2\pi}{\beta''}
            \bigg)^{\frac{s}{2}}\,
        \times\\
        \times
        \sum\limits_{n=0}^{s}\,
            \bigg(
                \frac{2}{C}
            \bigg)^n\,
        \sum_{k=0}^n\,
            \frac{C^k}{k!}\,
        \sum_{r=0}^{n-k}\,
            \frac{s^r}{r!}\,
            \big(
                C_1+C_2t
            \big)^r\,
            \bigg(
                \frac{2t}{\beta'}
            \bigg)^{n-k-r},
    \end{multline}
    where
    $\beta=\beta'+\beta''$.

    We assume
    $\tilde{C_1}=\max(C_1,1),$
    $\tilde{C_2}=\max\big(C_2,\frac{2}{\beta^{\prime}}\big).$
    Then for arbitrary
    $t>0$
    the following inequalities
    \begin{equation*}
        \tilde{C_1}+\tilde{C_2t}\geq1\quad
        \text{and}\quad
        \big(
            \tilde{C_1}+\tilde{C_2}t
        \big)
        \frac{\beta^{\prime}}{2t}\geq1
    \end{equation*}
    are satisfied. Because of these we have
    \begin{equation}
    \label{C1+C2t}
        \big(
            C_1+C_2t
        \big)^r
        \bigg(
            \frac{2t}{\beta^{\prime}}
        \bigg)^{n-k-r}
        \leq
        \big(
            \tilde{C_1}+\tilde{C_2}t
        \big)^n.
    \end{equation}
    Considering (\ref{C1+C2t}) and inequalities
    \begin{equation}
    \label{e^<}
        \sum_{k=0}^n\,
            \frac{C^k}{k!}
        \leq
        \mathrm{e}^C,
        \quad
        \sum_{r=0}^{n-k}\,
        \frac{s^r}{r!}
        \leq
        \mathrm{e}^s
    \end{equation}
    estimate (\ref{b<<}) takes the form
    \begin{equation}
    \label{b<<<}
        \Big|
            \left\langle
                G(t)|F(0)
            \right\rangle
        \Big|
        \leq
        2e^C\,
        ||{F(0)}||_{L_{\xi,\beta}^{\infty}}\,
        ||G(0)||_{C_\gamma}\,
        \sum\limits_{s=0}^{\infty}\,
        \bigg(
            \frac{e\,C\,\xi}{\gamma}
        \bigg)^s\,
        \bigg(
            \frac{2\pi}{\beta''}
        \bigg)^{\frac{s}{2}}\,
        \sum\limits_{n=0}^{s}\,
        \bigg(
            \frac{2}{C}
        \bigg)^n\,
        \big(
            \tilde{C_1}+\tilde{C_2t}
        \big)^n.
    \end{equation}

    After inverting the order of summation in (\ref{b<<<}) we definitely
    obtain estimate (\ref{b<TH}).

    Thus far, from inequality (\ref{b<TH}) it follows under condition
    (\ref{xi}) functional (\ref{reg<Gt>b}) for the mean value of
    observables is well-defined, with corresponding to functional (\ref{<Gt>b}) in the case of the regularized representation for a solution of the initial value problem to the dual BBGKY hierarchy.
\end{proof}

We note that regularized expansion (\ref{Greg}) for initial observables of
additive type
$G^{(1)}(0)=\big(0,a_{1}(0,x_1),\linebreak0,\ldots,0,\ldots\big)$
has the form
\begin{equation*}
\label{G(1)t}
    G^{(1)}(t)
   =(0,G^{(1)}_1(t,x_1),G^{(1)}_2(t,x_1,x_2),\ldots,
    G^{(1)}_s(t,x_1,\ldots,x_s),\ldots)
\end{equation*}
where
\addtocounter{equation}{-0}
\begin{subequations}
    \begin{align}
    \label{g(1)1}
        &G^{(1)}_1(t,x_1)
        =\mathfrak{A}_{1}(t,x_1)a_{1}(0,x_1),\\[10pt]
         \hskip-20pt
         \text{and}
        &\nonumber\\
    \label{g(1)s}
        &G^{(1)}_s(t,x_1,\ldots,x_s)
        =\sum\limits_{j=1}^s\,
         \sum\limits_{\substack{Z\subset Y\setminus x_j\\ Z\neq{\O}}}\,
         (-1)^{|Y\setminus Z|-1}\,
         \mathfrak{A}_{2}(t,x_j,Z)\,
         a_1(0,x_j),
         \quad
         s\geq2.
    \end{align}
\end{subequations}

If
$F(0)\in L_{\xi,\beta}^\infty$,
$G^{(1)}(0)\in C_{\gamma,0}$
and
$G^{(1)}_s(t,x_1,\ldots,x_s)$
is determined by formulae (\ref{g(1)1}), (\ref{g(1)s}) then functional
(\ref{<Gt>b}) takes the form
\begin{multline}
\label{G1tb}
       \big\langle
            G^{(1)}
       \big\rangle(t)
      =\big\langle
            G^{(1)}(t)|F(0)
       \big\rangle
      =\int\limits_{\mathbb{R}^{1}\times\mathbb{R}^{1}}
           dx_1\,
           \mathfrak{A}_{1}(t,x_1)\,
           a_{1}(0,x_1)\,
           F_{s}(0,x_1,\ldots,x_{s})+\\
      +\sum\limits_{s=2}^{\infty}\,
           \frac{1}{s!}\,
           \int\limits_{(\mathbb{R}^{s}\setminus W_{s})\times\mathbb{R}^{s}}
               dx_1\ldots dx_{s}\,
               \sum\limits_{j=1}^s\,
                   \sum\limits_{\substack{Z\subset Y\setminus x_j\\ Z\neq{\O}}}\,
                       (-1)^{|Y\setminus Z|-1}\,
                       \mathfrak{A}_{2}(t,x_j,Z)\,
                       a_1(0,x_j)\,
                       F_{s}(0,x_1,\ldots,x_{s})
\end{multline}
and the following estimate holds:
\begin{equation}
\label{G1t<b}
     \Big|
        \big\langle
             G^{(1)}(0)|F(t)
        \big\rangle
     \Big|
     \leq
     \frac{2\,e^{C+1}\,C\,\xi}{\gamma}\,
     \sqrt{\frac{2\pi}{\beta''}}\;
     ||{F(0)}||_{L_{\xi,\beta}^{\infty}}\,
     ||G^{(1)}(0)||_{C_\gamma}\,
     \sum\limits_{s=0}^{\infty}\,
         \big(2e\xi\big)^s
         \bigg(
            \frac{2\pi}{\beta''}
         \bigg)^\frac{s}2\,
     \big(
        \tilde{C}_1+\tilde{C}_2t
     \big)^{s}.
\end{equation}

Inequality (\ref{G1t<b}) is similar to (\ref{b<<<}).
Indeed, let
$F(0)\in L_{\xi,\beta}^\infty$
and
$G^{(1)}(0)\in C_{\gamma,0}.$
For (\ref{G1tb}) we get

\begin{equation}
\label{1tb<}
      \int\limits_{\mathbb{R}\times\mathbb{R}}
          dx_1\,
          \big|
            \mathfrak{A}^+_{1}(t,x_1)\,
            a_{1}(0,x_1)
          \big|\,
          \big|
            F_{1}(0,x_1)
          \big|
       \leq
       \frac{\xi\,C(t)}{\gamma}\,
       ||{F(0)}||_{L_{\xi,\beta}^{\infty}}\,
       ||G^{(1)}(0)||_{C_\gamma},
\end{equation}
where
$C(t)\equiv\big|l_1(0)\big|\sqrt{\frac{2\pi}{\beta}}+\frac{2t}{\beta}$,
and
\begin{multline}
\label{2tb<}
       \sum\limits_{s=2}^{\infty}\,
           \frac{1}{s!}\,
           \int\limits_{(\Omega_1(0)\times\Omega_{s-1}(t))\times\mathbb{R}^{s}}
               dx_1\ldots dx_{s}\,
               \sum\limits_{j=1}^s\,
                   \sum\limits_{\substack{Z\subset \{x_1,\ldots,x_{s}\}\setminus x_j\\Z\neq{\O}}}\,
                       \big|
                            \mathfrak{A}_{2}(t,x_j,Z)\,
                            a_{1}(0,x_j)
                       \big|\,
                       \big|
                            F_{s}(0,x_1,\ldots,x_{s})
                       \big|
       \leq\\
       \leq
       \frac{C\xi}{\gamma}\,
       e^{C+1}\,
       \sqrt{\frac{2\pi}{\beta''}}\;
       ||{F(0)}||_{L_{\xi,\beta}^{\infty}}\,
       ||G^{(1)}(0)||_{C_\gamma}\,
       \sum\limits_{s=1}^{\infty}\,
       (2e\xi)^s\,
       \bigg(
            \frac{2\pi}{\beta''}
       \bigg)^{\frac{s}{2}}\,
       \big(
            \tilde{C}_1+\tilde{C}_2t
       \big)^{s},
\end{multline}
where notations of estimate (\ref{b<TH}) are in service. According
to (\ref{1tb<}) and (\ref{2tb<}) we obtain estimate (\ref{G1t<b}).

It is notable that the existence of functionals for observables of $s$-fold
class has also been considered  for another representation of the solution to
the dual BBGKY hierarchy formulated in \cite{BG01}.
\medskip

\section{The mean value of observables: the evolution of states}
\setcounter{equation}{0}
\setcounter{proposition}{0}

We establish  the existence of functional (\ref{<Gt>a}), if the evolution
of states is described by the regularized representation \cite{R06SIGMA} for a
solution of the initial value problem to the BBGKY hierarchy.

Let
$(x_1,\ldots,x_s)\equiv Y,$
$(Y,x_{s+1},\ldots,x_{s+n})\equiv X,$
i.~e.
$X\setminus Y=(x_{s+1},\ldots,x_{s+n}),$
$dx_{s+1}\ldots dx_{s+n}\equiv\linebreak\equiv d(X\setminus Y)$.
We denote by
$|X|$
the number of elements of  the set
$X$, i.e. $|X|=|Y|+|{X\setminus Y}|=s+n$.

Let us consider functional (\ref{<Gt>a}) with the state
$F_s(t,x_1,\ldots,x_s)\equiv F_{|Y|}(t, Y)$
determined by a solution of the
initial value problem to the BBGKY hierarchy {\it (a regularized solution)}
\cite{R06SIGMA}
\begin{multline}
\label{FgU2}
       F_{|Y|}(t,Y)
      =\mathfrak{A}_{1}(-t,Y)\,
       F_{|Y|}(0,Y)+\\
      +\sum\limits_{n=1}^{\infty}\,
           \frac{1}{n!}
           \int\limits_{(\mathbb{R}^n\setminus W_n)\times\mathbb{R}^n}
               d(X\setminus Y)\,
               \sum\limits_{\substack{Z\subset X\setminus Y\\ Z\neq{\O}}}\,
                   (-1)^{|X\setminus(Y\cup Z)|}\,
                   \mathfrak{A}_{2}(-t,Y,Z)\,
       F_{|X|}(0,X),
       \quad
       |X\setminus Y|\geq1,
\end{multline}
where
$\sum\limits_{\substack{Z\subset X\setminus Y\\ Z\neq{\O}}}\!\!\!\!\!$
is a sum over all nonempty subsets
$Z$
of the set
$X\setminus Y$,
the evolution operator
$\mathfrak{A}_{1}(-t,Y)$
is a 1st order cumulant of the evolution operators (\ref{Sspher}):
\begin{equation*}
\label{U_1}
    \mathfrak{A}_{1}(-t,Y)=S_{|Y|}(-t,Y),
\end{equation*}
the evolution operator
$\mathfrak{A}_{2}(-t,Y,Z)$
is a 2nd order cumulant:
\begin{equation*}
\label{U_2}
    \mathfrak{A}_{2}(-t,Y,Z)
   =S_{|Y\cup Z|}(-t,Y,Z)
   -S_{|Y|}(-t,Y)
    S_{|Z|}(-t,Z).
\end{equation*}

\noindent
\begin{proposition}
\label{P:<G0|Ft>}
    For a system of particles with a pair interaction potential
    $\Phi$
    satisfying (\ref{PhiSpher}), if \linebreak
    $F(0)\in L_{\xi,\beta}^\infty$,
    $G(0)\in C_{\gamma,0}$
    and state is determined by expansion (\ref{FgU2}), then for
    \begin{equation}
    \label{xi<}
        \xi
       <\min\bigg(\frac{\gamma}{C};
                  \frac{2}{(2\tilde{C_1}+1)^2-1}
            \bigg)\,
        \mathrm{e}^{-2\beta b-1}
        \sqrt{\frac{\beta^{\prime\prime}}{2\pi}}
    \end{equation}
    and
    \begin{equation}
    \label{t0}
       0\leq t<t_0 \equiv\frac{1}{2\tilde{C_2}}
        \Bigg(
        -2\tilde{C_1}-1
        +\bigg(1+\frac{2\mathrm{e}^{-2\beta b-1}}{\xi}
                 \sqrt{\frac{\beta^{\prime\prime}}{2\pi}}
         \bigg)^{\frac12}
        \Bigg),
    \end{equation}
    functional (\ref{<Gt>a}) is well-defined and the following
    estimate holds:
    \begin{multline}
    \label{A<}
        \Big|
            \langle
                G(0)|F(t)
            \rangle
        \Big|
        \leq
        2\,e^C\,
        ||{F(0)}||_{L_{\xi,\beta}^{\infty}}\,
        ||G(0)||_{C_\gamma}\,
        \sum\limits_{s=0}^{\infty}\,
        \left(\frac{C\,\xi\,\mathrm{e}^{2\beta b+1}}{\gamma}
              \sqrt{\frac{2\pi}{\beta^{\prime\prime}}}
        \right)^s
        \times\\
        \times
        \sum_{n=0}^\infty
        \left(2\,\xi\,
              \mathrm{e}^{2\beta b+1}
              \big(
                \tilde{C_1}+\tilde{C_2}t
              \big)
              \big(
                1+\tilde{C_1}+\tilde{C_2}t
              \big)
              \sqrt{\frac{2\pi}{\beta^{\prime\prime}}}
        \right)^n,
    \end{multline}
    where
    $C=\max_{i=1,\ldots,s}|l_{i}(0)|$,
    $\tilde{C_1}=\max(2R,1),$
    $\tilde{C_2}=\max\big(2(4b+1),\frac{2}{\beta^{\prime}}\big)$
    and
    $\beta=\beta^\prime+\beta^{\prime\prime}.$
\end{proposition}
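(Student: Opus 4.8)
The plan is to follow the scheme of Proposition~\ref{P:<Gt|F0>} with the roles of observable and state interchanged: now the compact support is carried by the observable $G_s(0)\in C_{\gamma,0}$, while the momentum-Gaussian decay comes from the initial state $F(0)\in L_{\xi,\beta}^\infty$ acted on by the \emph{backward} evolution operators in (\ref{FgU2}). First I would substitute expansion (\ref{FgU2}) for $F_{|Y|}(t,Y)$ into functional (\ref{<Gt>a}), separating the $n=0$ term $\mathfrak{A}_1(-t,Y)\,F_{|Y|}(0,Y)$ from the sum over $n\geq1$. Since $G_s(0,Y)$ is supported on the compact $\Omega_{|Y|}(0)=l_1(0)\times\cdots\times l_s(0)$, the integral over the configuration variables of $Y$ is immediately restricted to a set of volume at most $C^s$, with $C=\max_i|l_i(0)|$, which supplies the $C^s$ appearing in the $s$-series of (\ref{A<}).

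For the terms with $n\geq1$ the essential device is the vanishing property of the second-order cumulant, the analog of (\ref{gU+G=0}): $\mathfrak{A}_2(-t,Y,Z)\,F_{|X|}(0,X)=0$ whenever the cluster $Z\subset X\setminus Y$ fails to collide with the cluster $Y$ on $[0,t)$, since then $S_{|Y\cup Z|}(-t)$ factorizes into $S_{|Y|}(-t)\,S_{|Z|}(-t)$. Consequently the integration $\int d(X\setminus Y)$ over the additional particles is confined to an interaction region $\Omega_{|X\setminus Y|}(t,\Omega_{|Y|}(0))$ of finite volume, which I would estimate exactly as in (\ref{lt<})--(\ref{V+<=}): tracking the displacement $2sR+2\sum_i\int_0^t|p_i(\tau)|\,d\tau$ and invoking conservation of energy together with the potential bound (\ref{Phi<=}) to replace $2\sum_i p_i(\tau)$ by $\sum_i p_i^2+(4b+1)s$, as in (\ref{p<}). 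This produces a per-particle length factor carrying both a term linear in $t$ and a term linear in $\sum_i p_i^2$.

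The new ingredient, absent from Proposition~\ref{P:<Gt|F0>}, is that the Gaussian weight furnished by $\|F(0)\|_{L_{\xi,\beta}^\infty}$ is evaluated at the \emph{backward-evolved} momenta of the cluster $Y\cup Z$. Here I would again use conservation of energy for the finite $|Y\cup Z|$-particle subsystem, bounding the potential-energy difference by $2b\,|Y\cup Z|$ via (\ref{Phi<=}); this converts the weight at the evolved momenta into the weight at the integration momenta at the cost of a factor $\mathrm{e}^{2\beta b}$ per particle. Combined with the factor $\mathrm{e}$ per particle produced by the elementary bounds $\sum_k C^k/k!\le\mathrm{e}^C$ and $\sum_r s^r/r!\le\mathrm{e}^s$ (cf.\ (\ref{e^<})), this is precisely the source of the factors $\mathrm{e}^{2\beta b+1}$ in (\ref{xi<}) and (\ref{A<}). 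Estimating the cumulants by the pointwise operator bounds $\|\mathfrak{A}_1\|\leq1$ and $\|\mathfrak{A}_2\|\leq2$ (as in (\ref{gU+1<})--(\ref{gU+2<})), counting the $2^n$ subsets $Z$ and the factorials $1/s!$, $1/n!$, and then performing the $s+n$ momentum integrals by splitting $\beta=\beta'+\beta''$ --- using (\ref{pexp<=}) so that $\beta'$ absorbs the powers of $\sum_i p_i^2$ coming from the volume while $\beta''$ yields the clean Gaussians $(\sqrt{2\pi/\beta''})^{\,s+n}$ --- collapses the whole expression into the product of an $s$-series and an $n$-series.

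The main obstacle is the interaction-region volume estimate for the integrated particles $X\setminus Y$. Because the base cluster $Y$ itself spreads under the backward flow while the extra particles must travel to meet it, the length scales coupling through the momentum-displacement term $t\sum_i p_i^2$, so that after the momentum integrations the effective per-extra-particle ratio becomes \emph{quadratic} in $u\equiv\tilde{C}_1+\tilde{C}_2 t$, of the form $2\xi\mathrm{e}^{2\beta b+1}\,u(1+u)\sqrt{2\pi/\beta''}$, in contrast to the linear factor of Proposition~\ref{P:<Gt|F0>}. Requiring this ratio to be $<1$ amounts to $u(1+u)<\tfrac12\mathrm{e}^{-2\beta b-1}\xi^{-1}\sqrt{\beta''/2\pi}$, and solving the resulting quadratic inequality for $u$ gives exactly the square-root time bound (\ref{t0}); the positivity $t_0>0$ and the convergence of the $s$-series $\big(\tfrac{C\xi\mathrm{e}^{2\beta b+1}}{\gamma}\sqrt{2\pi/\beta''}\big)^s<1$ yield, respectively, the two alternatives in the minimum (\ref{xi<}). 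Summing both geometric series then produces estimate (\ref{A<}) and shows that functional (\ref{<Gt>a}) is well-defined on the stated range of $\xi$ and $t$.
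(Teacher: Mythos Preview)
Your proposal is correct and follows essentially the same route as the paper: substitute (\ref{FgU2}) into (\ref{<Gt>a}), restrict the $Y$-integration to $\Omega_{|Y|}(0)$ and the $(X\setminus Y)$-integration to the interaction region via the vanishing of $\mathfrak{A}_2(-t,Y,Z)$, bound the action of the backward cumulants on the Gaussian by $2^{|X\setminus Y|}\mathrm{e}^{2\beta b|X|}$ through energy conservation and (\ref{Phi<=}), and then integrate the momenta with the $\beta=\beta'+\beta''$ split. The only step you leave implicit that the paper spells out is the multinomial expansion of the volume bound $\big(C+(C_1+C_2t)(s+n)+t\sum_{i=1}^{s+n}p_i^2\big)^n$ separating the $s$- and $n$-contributions, after which the extra geometric sum $\sum_{k=0}^n(\tilde C_1+\tilde C_2 t)^k\le(1+\tilde C_1+\tilde C_2 t)^n$ is exactly what produces the quadratic factor $u(1+u)$ you identified.
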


\begin{proof}
    Let
    $G(0)\in C_{\gamma,0}$
    and
    $F(0)\in L_{\xi,\beta}^{\infty}$.
    We suppose at the initial instant
    $t=0$
    data of the set
    $Y$
    are fixed in such a way that observable
    $G_{|Y|}(0, Y)$
    is located in the configuration space on the compact
    $\Omega_{|Y|}(0)={l_1(0)\times\ldots\times l_{|Y|}(0)}$
    (see p.~\pageref{A}).

    If the marginal distribution functions
    $F_s(t,x_1,\ldots,x_s)
    \equiv
    F_{|Y|}(t,Y)$
    in functional (\ref{<Gt>a}) are determined by expansion (\ref{FgU2}).
    Then functional (\ref{<Gt>a}) takes the form
    \begin{multline}
    \label{reg<Gt>a}
        \langle G(0)|F(t) \rangle
       =\sum\limits_{s=0}^{\infty}\,
            \frac{1}{s!}
            \int\limits_{(\mathbb{R}^s\setminus W_s)\times\mathbb{R}^s}
                dY\,
                \Big(
                    \mathfrak{A}_{1}(-t,Y)\,
                    F_{|Y|}(0,Y)+\\
                   +\sum\limits_{n=1}^{\infty}\,
                        \frac{1}{n!}
                        \int\limits_{(\mathbb{R}^n\setminus W_n)\times\mathbb{R}^n}
                            d(X\setminus Y){}\,
                            \sum\limits_{\substack{Z\subset X\setminus Y\\ Z\neq{\O}}}\,
                                (-1)^{|X\setminus(Y\cup Z)|}\,
                                \mathfrak{A}_{2}(-t,Y,Z)\,
                    F_{|X|}(0,X)
                \Big)\,
            G_{|Y|}(0, Y).
    \end{multline}
    \medskip

    In expansion (\ref{reg<Gt>a}) we restrict the region
    $\mathbb{R}^s\setminus W_s$
    of integration with respect to the configuration variables since the
    integrand is finite and nonzero. Indeed, taking into account the 1st order cumulant
    $\mathfrak{A}_{1}(-t,Y)$
    is equivalent to  operator
    $S_{|Y|}(-t,Y)$ (\ref{Sspher}),
    therefore in expansion (\ref{reg<Gt>a}) the expression
    $\mathfrak{A}_{1}(-t,Y)\,F_{|Y|}(0,Y)\,G_{|Y|}(0,Y)$
    is integrated not over the whole configuration space, and only with respect to variables of the compact
    $\Omega_{|Y|}(0),$
    $\Omega_{|Y|}(0)\subset\mathbb{R}^s\setminus W_s$,
    shifted along the configuration trajectory with a finite volume
    $V_{\Omega_{|Y|}}(0)
    ={l_{|Y|}}^s(0)$.

    The expression
    $$\sum\limits_{\substack{Z\subset X\setminus Y\\ Z\neq{\O}}}\,
     (-1)^{|X\setminus(Y\cup Z)|}\,
     \mathfrak{A}_{2}(-t,Y,Z)\,
     F_{|X|}(0,X)$$
    equals zero exterior to the region
    $\Omega_{|X\setminus Y|}(t,\Omega_{|Y|}(0))$
    of interaction of the particles with arbitrary initial data
    $X\setminus Y$
    with the particles with fixed initial data
    $Y$,
    $Y\in\Omega_{|Y|}(0)$
    within the time interval
    $[0,t)$
    as a consequence of the equality
    \begin{equation*}
        \mathfrak{A}_{2}(-t,Y,Z)\,
        F_{|X|}(0,X)
       =0,
        \quad
        Z\subset X\setminus Y,\,
        Z\neq{\O}.
    \end{equation*}

    The interaction region
    $\Omega_{|X\setminus Y|}(t,\Omega_{|Y|}(0))$
    has a finite volume
    $V_{\Omega_{|X\setminus Y|}}(t)$
    obeying the following estimate \cite{R06SIGMA}
    \begin{equation}
    \label{V<=}
        V_{\Omega_{|X\setminus Y|}}(t)
        \leq
        \Big(C
             +(C_1+C_2t)(s+n)
             +t\sum\limits_{i=1}^{s+n}\,
                   p_i^2
        \Big)^n
        C^s,
    \end{equation}
    where
    $C=\max_{i=1,\ldots,s}\big|l_{i}(0)\big|$,
    $C_1\equiv 2R$,
    $C_2\equiv2\,(4b+1)$
    and
    $b$
    is determined by condition (\ref{Phi<=}).

    By this means, for (\ref{reg<Gt>a}) an inequality
    \begin{multline}
        \label{a<}
        \Big|
            \langle
                G(0)|F(t)
            \rangle
        \Big|
        \leq2\,
        ||{F(0)}||_{L_{\xi,\beta}^{\infty}}\,
        ||G(0)||_{C_\gamma}\,
        \sum\limits_{s=0}^{\infty}\,
            \frac{\big(\xi\mathrm{e}^{2\beta b}\big)^s}{\gamma^s}
            \int\limits_{\Omega_{|Y|}(0)\times\mathbb{R}^s}
                dY\,
                \exp\Big\{-\beta\,\sum\limits_{i=1}^{s}\frac{p_i^2}{2}\Big\}
        \times\\
        \times
        \sum\limits_{n=0}^{\infty}\,
            \frac{\big(2\xi\mathrm{e}^{2\beta b}\big)^n}{n!}
            \int\limits_{\Omega_{|X\setminus Y|}(t)\times\mathbb{R}^n}
                d(X\setminus Y)\,
                \exp\Big\{-\beta\sum\limits_{i={s+1}}^{s+n}\frac{p_i^2}{2}\Big\}
    \end{multline}
    is satisfied.
    In (\ref{a<}) the following estimate
    \begin{equation*}
    \label{U12exp<=}
        \Big(\mathfrak{A}_{1}(-t,Y)
            +\sum\limits_{\substack{Z\subset X\setminus Y\\ Z\neq{\O}}}\,
             \mathfrak{A}_{2}(-t,Y,Z)
        \Big)\,
        \exp\Big\{-\beta\sum\limits_{i=1}^{s+n}\frac{p_i^2}{2}\Big\}
        \leq
        2^{|X\setminus Y|}\,
        \mathrm{e}^{2\beta b |X|}\,
        \exp\Big\{-\beta\sum\limits_{i=1}^{s+n}\frac{p_i^2}{2}\Big\}
    \end{equation*}
    is taken into account. It is a consequence of invariance of Hamiltonian
    of $n$-particle system after acting of the evolution operator
    $S_{n}(-t)$ (\ref{Sspher}),
    boundedness of the interaction region and conditions (\ref{Phi<=})
    on the interaction potential
    $\Phi$
    \cite{CGP97}.

    The expression for estimate (\ref{V<=}) is represented in the following form
    \begin{multline}
    \label{^N}
            \Big(C
                 +(C_1+C_2t)\,s
                 +t\sum\limits_{i=1}^{s}\,
                       p_i^2
                 +(C_1+C_2t)\,n
                 +t\sum\limits_{i=s+1}^{s+n}\,
                       p_i^2
            \Big)^{n}=\\
           =\sum\limits_{k=0}^n\,
                n!\,
            \sum\limits_{l=0}^k\,
                \frac{C^l}{l!}\,
            \sum\limits_{m=0}^{k-l}\,
                \frac{s^m}{m!}\,
                (C_1+C_2t)^m\,
                \frac{t^{k-l-m}}{(k-l-m)!}\,
            \Big(\sum\limits_{i=1}^{s}
                      p_i^2
            \Big)^{k-l-m}
            \times\\
            \times
            \sum\limits_{r=0}^{n-k}\,
                \frac{n^r}{r!}\,
                (C_1+C_2t)^r\,
                \frac{t^{n-k-r}}{(n-k-r)!}\,
                \Big(\sum\limits_{i=s+1}^{s+n}\,
                          p_i^2
                \Big)^{n-k-r}.
    \end{multline}
    After integration with respect to the configuration variables in each term
    of series on the right side of inequality (\ref{a<}), allowing for equality (\ref{^N}) and an inequality as (\ref{pexp<=}) we calculate the integrals over momentum variables. As a result we obtain
    \begin{multline}
    \label{a<<}
            \Big|
                \langle
                    G(0)|F(t)
                \rangle
            \Big|
            \leq 2\,
            ||{F(0)}||_{L_{\xi,\beta}^{\infty}}\,
            ||G(0)||_{C_\gamma}\,
            \sum\limits_{s=0}^{\infty}\,
                \left(\frac{C\xi\mathrm{e}^{2\beta b}}{\gamma}\right)^s\,
                \left(\frac{2\pi}{\beta^{\prime\prime}}\right)^{\frac{s}{2}}\,
            \sum_{n=0}^\infty\,
                \left(2\xi\mathrm{e}^{2\beta b}\right)^n\,
                \left(\frac{2\pi}{\beta^{\prime\prime}}\right)^{\frac{n}{2}}
            \times\\
            \times
            \sum_{k=0}^n\,
                \sum_{l=0}^k\,
                    \frac{C^l}{l!}\,
                    \sum_{m=0}^{k-l}\,
                        \frac{s^m}{m!}\,
                        (C_1+C_2t)^m\,
                        \left(\frac{2t}{\beta^\prime}\right)^{k-l-m}\,
                        \sum_{r=0}^{n-k}\,
                            \frac{n^r}{r!}\,
                            (C_1+C_2t)^r\,
                            \left(\frac{2t}{\beta^\prime}\right)^{n-k-r},
    \end{multline}
    where
    $\beta=\beta^\prime+\beta^{\prime\prime}$.

    Let us set the notation
    $\tilde{C_1}=\max(C_1,1),$
    $\tilde{C_2}=\max\big(C_2,\frac{2}{\beta^{\prime}}\big)$.
    Applying inequalities (\ref{C1+C2t}), (\ref{e^<}) and the relation
    \begin{equation*}
        \sum\limits_{k=0}^n\,
            \big(
                \tilde{C}_1+\tilde{C}_2t
            \big)^k
        \leq
        \big(1
            +\tilde{C}_1
            +\tilde{C}_2t
        \big)^n,
    \end{equation*}
    we see estimate (\ref{a<<}) takes  the form (\ref{A<}).

    From (\ref{A<}) it follows that if
    $\xi$
    complies with condition
    $(\ref{xi<})$,
    then expansion (\ref{reg<Gt>a}) converges under (\ref{t0}).
\end{proof}

For an additive type of observables
$G^{(1)}(0)=\big(0,a_{1}(0,x_1),0,$ $\ldots,0,\ldots\big)$ 
functional (\ref{reg<Gt>a}) becomes
\begin{equation}
\label{G1ta}
       \big\langle
            G^{(1)}
       \big\rangle(t)
      =\big\langle
            G^{(1)}(0)|F(t)
       \big\rangle
      =\int\limits_{\mathbb{R}\times\mathbb{R}}
           dx_1\,
           a_{1}(0,x_1)\,
           F_{1}(t,x_1),
\end{equation}
where 
$F_{1}(t,x_1)$ 
is determined by formula \eqref{FgU2} for 
$|Y|=1.$

Functional (\ref{G1ta}) is well-defined in the case of
\begin{equation*}
\label{xi<}
    \xi
   <\frac{2\mathrm{e}^{-2\beta b-1}}{(2\tilde{C_1}+1)^2-1}\,
    \sqrt{\frac{\beta^{\prime\prime}}{2\pi}}
    \qquad
    \text{as}
    \qquad
    0\leq t<t_0,
\end{equation*}
where
$t_0$
is determined by \eqref{t0}.

\section{Conclusion}
\setcounter{equation}{0}

We prove the local, in time, existence of functionals (\ref{<Gt>b}) and (\ref{<Gt>a}) in  the cases of evolution of observables  and evolution of states for the initial data
$F(0)\in L_{\xi,\beta}^\infty$
describing infinite-particle systems for the specific class of observables
$G(0)\in C_{\gamma,0}$
defined in Section~1.

The results obtained allow us to establish the existence of the mean values for classes of observables
$G(0)\in C_{\gamma,0}$
wider than we considered earlier (e.g., the local observables \cite{Bo48} the mean values of which are governed by the hydrodynamic equations).

Our results can also be applied in the case of the so-called intensive thermodynamic observables (e.g., the number of particles). For the corresponding mean values to be well-defined one must then introduce the densities of these observables.
One can prove the existence of mean values of the intensive thermodynamic observables by using the results obtained above in the thermodynamic limit \cite{CGP97}.

Applying the method of continuation of the BBGKY hierarchy solution developed in \cite{CGP97,Pe79}, one can  prove the global in time  existence of functionals (\ref{<Gt>b}) and (\ref{<Gt>a}) for the initial data close to equilibrium.

Furthermore, the existence of functionals for the characteristics of a deviation from the mean values of observables (e.g., the dispersion) describing fluctuations in non-equilibrium statistical systems can be proved similarly to Propositions~\ref{P:<Gt|F0>} and \ref{P:<G0|Ft>}.

\bigskip
\noindent
\underline{Acknowledgments}:

The research was supported in part by the Ministry of Education and Science of
Ukraine under \linebreak Grant~M/124-2007. I am grateful to Professor V.~I.~Gerasimenko  for  a number of useful comments.

\vskip1cm
\vskip-1cm

\label{ryabukha:LastPage}
\newpage

\end{document}